\colorlet{istorange}{orange}
\colorlet{istgreen}{green!50!black}
\colorlet{istblue}{blue} % use structure theme to change
\colorlet{istred}{red!90!black}
\newtheorem{theorem}{Theorem}
\newtheorem{lemma}{Lemma}
\newtheorem{corollary}{Corollary}
\newtheorem{assumption}{Assumption}
\newtheorem{definition}{Definition}
\newtheorem{proposition}{Proposition}
\newtheorem{remark}{Remark}
\newcommand{\I}{\mathbb{I}}
\newcommand{\R}{\mathbb{R}}
\newcommand{\N}{\mathbb{N}}
\renewcommand*{\@opargbegintheorem}[3]{\trivlist
	\item[\hskip \labelsep{\textit{ #1\ #2}}] \textit{(#3)}: }
\title{\LARGE \bf Improved stability conditions for systems under aperiodic sampling: model- and data-based analysis$^*$}
\author{Stefan Wildhagen, Julian Berberich, Matthias Hirche and Frank Allg{\"o}wer% <-this % stops a space
	\thanks{$^*$Funded by Deutsche Forschungsgemeinschaft (DFG, German Research Foundation) under Germany's Excellence Strategy - EXC 2075 - 390740016 and under grant AL 316/13-2 - 285825138. We acknowledge the support by the Stuttgart Center for Simulation Science (SimTech) and thank the International Max Planck Research School for Intelligent Systems (IMPRS-IS) for supporting J. Berberich. The authors are with the University of Stuttgart, Institute for Systems Theory and Automatic Control, Germany. {\tt\small \{wildhagen,}{\tt\small berberich,} {\tt\small hirche,}{\tt\small allgower\}@ist.uni-}{\tt\small stuttgart.de}} %
}
\begin{document}

\maketitle
\thispagestyle{empty}
\pagestyle{empty}

%%%%%%%%%%%%%%%%%%%%%%%%%%%%%%%%%%%%%%%%%%%%%%%%%%%%%%%%%%%%%%%%%%%%%%%%%%%%%%%%
\begin{abstract}
Discrete-time systems under aperiodic sampling may serve as a modeling abstraction for a multitude of problems arising in cyber-physical and networked control systems. Recently, model- and data-based stability conditions for such systems were obtained by rewriting them as an interconnection of a linear time-invariant system and a delay operator, and subsequently, performing a robust stability analysis using a known bound on the gain of this operator. In this paper, we refine this approach: First, we show that the delay operator is input-feedforward passive and second, we compute its gain exactly. Based on these findings, we derive improved stability conditions both in case of full model knowledge and in case only data are available. In the latter, we require only a finite-length and potentially noisy state-input trajectory of the unknown system. In two examples, we illustrate the reduced conservativeness of the proposed stability conditions over existing ones.
\end{abstract}
%%%%%%%%%%%%%%%%%%%%%%%%%%%%%%%%%%%%%%%%%%%%%%%%%%%%%%%%%%%%%%%%%%%%%%%%%%%%%%%%

\section{Introduction} \label{sec:intro}

As the fields of cyber-physical systems (CPS) and networked control systems (NCS) became more and more prevalent in both research and practice in recent decades, there has also been a renewed interest in the study of sampled-data systems, and especially in systems under aperiodic sampling (see \cite{hetel2017recent} for an overview). Among the reasons for this development is that many challenges and concepts arising in these fields, e.g., packet dropouts, delays, or event-triggered strategies, yield aperiodic sampling. For the aforementioned scenarios, there is often an upper bound on the time span between sampling instants available, e.g., if the number of consecutive packet dropouts and the length of delays is limited \cite{xiong2007packet_loss} or if the inter-event times are upper bounded \cite{gleizer2021bisimulation}. Hence, the resulting feedback loops can be analyzed by abstracting them as an aperiodically sampled system with arbitrary, but upper bounded sampling intervals. The maximum sampling interval (MSI) is a vital quantity in this respect, since it places a bound on how long the sampling intervals can become without sacrificing stability.

A great majority of existing works on aperiodically sampled systems is focused on continuous-time systems (cf. \cite{hetel2017recent}). However, a discrete-time formulation often arises naturally from the typical applications of aperiodically sampled systems, namely CPS and NCS, where sensing, control and actuation is taken care of by digital devices. A discrete-time setup was considered for instance in \cite{xiong2007packet_loss,hetel2011discrete} in the framework of switched systems, or in \cite{seuret2018wirtinger} using time-delay systems and Lyapunov-Krasovskii functionals for analysis.

All of the aforementioned works rely on accurate models in order to be effective, although such models can be challenging to obtain via first principles. Measured data of the system, on the other hand, are typically easy to acquire. One approach to leverage this fact is to estimate a model from the given data \cite{ljung1987system} and subsequently, to plug it into the model-based stability conditions \cite{hetel2017recent,xiong2007packet_loss,hetel2011discrete,seuret2018wirtinger}. Two shortcomings of this approach are on the one hand that it is often challenging to provide guarantees for the accuracy of a model estimated from finite and noisy data \cite{matni2019self}, and on the other that methods based on set membership estimation \cite{milanese1991optimal,belforte1990parameter} quickly grow in complexity with increasing system dimension.

As an alternative to this two-step procedure, there has recently emerged a stream of research which aims at giving system-theoretic guarantees directly from measured data \cite{willems2005note}. In this context, also data-driven approaches to analyze time-delay \cite{rueda2021delay} and aperiodially sampled systems \cite{berberich2021aper_samp,wildhagen2021dataMSI} were developed, guaranteeing closed-loop stability despite noisy measurements. In \cite{wildhagen2021dataMSI}, the considered discrete-time aperiodically sampled system was written as an interconnection of a linear time-invariant (LTI) system and a so-called delay operator. Then, the $\ell_2$ gain of this operator was bounded, and model-based as well as data-driven stability conditions were derived using robust control theory. This procedure closely resembles the robust input/output approach to aperiodically sampled systems, which is well-explored in continuous time \cite{mirkin2007some,fujioka2009IQC}, but has not received much attention in discrete time so far. Nonetheless, a formulation of the problem in discrete time is not only meaningful for CPS and NCS, but it also emerges naturally for data-based approaches, since data can only be measured at discrete time instants in any practical scenario.

In this paper, we refine the robust input/output approach to discrete-time aperiodically sampled systems by providing a more thorough analysis of the delay operator. In particular, we verify that in addition to the $\ell_2$ gain property, the delay operator is input-feedforward passive. Furthermore, we compute its $\ell_2$ gain exactly and we prove that this value is strictly smaller than the bound given in \cite{wildhagen2021dataMSI}. Based on this, we state improved stability conditions for the aperiodically sampled system, which guarantee stability for any sampling pattern satisfying a known upper bound on the sampling interval. First, we present criteria using full model knowledge and second, we provide data-driven conditions to analyze stability using only a finite-length, noise-corrupted state-input trajectory of the otherwise unknown system. In two examples, we demonstrate that the proposed stability conditions indeed yield better results than existing ones.

The remainder of this paper is organized as follows. In Section \ref{sec:setup}, we present the setup and problem statement. In Section \ref{sec:IO}, we explain the main idea of the robust input/output approach and derive input-feedforward passivity and the $\ell_2$ gain of the delay operator. In Section \ref{sec:stab}, we state the model-based and data-driven stability conditions and finally, we illustrate our results in Section \ref{sec:ex} with two examples.

\emph{Notation:} Let $\N$ be the set of natural numbers, $\N_0\coloneqq \N\cup \{0\}$ and $\N_{[a,b]}\coloneqq\N_0\cap[a,b]$, $\N_{\ge a} \coloneqq \N_0\cap[a,\infty)$, $a,b\in\N_0$. We denote by $I$ the identity matrix and by $0$ the zero matrix of appropriate dimension. Let $A\in\R^{n\times n}$ be a real matrix. We write $A\succ0$ $(A\succeq 0)$ if $A$ is symmetric and positive (semi-)definite, and we denote negative (semi-)definiteness similarly. Let $\sigma_\text{max}(A)$ ($\lambda_\text{max}(A)$) denote the maximum singular (eigen-) value of $A$. We write $\lVert v\rVert_2$ for the 2-norm of a vector $v\in\R^n$, $\lVert A\rVert_2$ for the induced 2-norm of $A$, and $\lVert A\rVert_F$ for the Frobenius norm of $A$. The Hermitian transpose of a complex matrix $B\in\mathbb{C}^{n\times m}$ is denoted by $B^*$. The Kronecker product of two matrices $C\in\R^{n\times m}$ and $D\in\R^{p\times r}$ is denoted by $C\otimes D$. We write $\star$ if an element in a matrix can be inferred from symmetry. We denote by $\ell_2$ the space of square integrable signals and by $\ell_{2e}$ the extended $\ell_2$ space. For some $T\in\N_0$, we denote by $\cdot_T:\ell_{2e}\to\ell_{2e}$ the truncation operator, which assigns to a signal $y\in\ell_{2e}$ the signal $y_T$ which satisfies $y_T(t)=y(t)$ for all $t\in\N_{[0,T]}$ and $y_T(t)=0$ for all $t\in\N_{\ge T+1}$. We write $\lVert x\rVert_{\ell_2}$ for the $\ell_2$ norm of a signal $x\in\ell_{2}$ and $\lVert \Delta\rVert_{\ell_2}\coloneqq\inf\{\gamma\;\vert\; \lVert\Delta(y)_T\rVert_{\ell_2}\le\gamma\lVert y_T\rVert_{\ell_2}, \; y\in\ell_{2e}, \; T\in\N_0\}$ for the $\ell_2$ gain of an operator $\Delta: \ell_{2e}\to\ell_{2e}$.

e use the well-established concept of integral quadratic constraints (IQCs) (see \cite{megretski1997system,kao2012discrete_delayed_IQC,hu2017discreteIQC,scherer2021dissipativity}) in order to describe input/output properties of the delay operator. A definition of a so-called hard static IQC, which we will use throughout this paper, is given below.
\begin{definition} \label{def:IQC}
	A bounded, causal operator $\Delta:\ell_{2e}^p\to\ell_{2e}^q$, $y\mapsto e$ satisfies the hard static IQC defined by a multiplier $\Pi\in\R^{(p+q) \times (p+q)}$ if for all $y\in\ell_2^p$, $e=\Delta(y)$, it holds that
		\begin{equation} \label{eq:def_IQC}
		\sum_{t=0}^{T}
		\begin{bmatrix} y(t) \\ e(t) \end{bmatrix}^\top
		\Pi \begin{bmatrix} y(t) \\ e(t) \end{bmatrix} \ge 0 , \quad\forall T\in\N_0.
		\end{equation}
\end{definition}
We write $\Delta\in\text{IQC}(\Pi)$ if $\Delta$ satisfies the hard static IQC \eqref{eq:def_IQC} in the sense of Definition \ref{def:IQC}.

\section{Setup and Problem Statement} \label{sec:setup}

In this paper, we consider a discrete-time LTI system
\begin{equation} \label{eq:system}
	x(t+1) = A_\text{tr} x(t) + B_\text{tr} u(t), \; x(0)=x_0\in\R^n
\end{equation}
with state $x(t)\in\R^n$, input $u(t)\in\R^m$ and time $t\in\N_0$. In closed-loop operation, \eqref{eq:system} is sampled and controlled at aperiodic \textit{sampling instants} $t_k\in\N_0$, $k\in\N_0$, where
\begin{equation*}
t_0 = 0, \quad t_{k+1}-t_k \ge 1,
\end{equation*}
meaning that the \textit{sampling interval} $h_k\coloneqq t_{k+1}-t_k$ is time-varying. The only a priori knowledge that we have about the sampling intervals is that they are upper bounded by a constant $\overline{h}\in\N$, i.e., $h_k\in\N_{[1,\overline{h}]}$. The sampled plant state $x(t_k)$ is available to the controller, from which it computes the control inputs via a linear state-feedback law $u(t_k)=K x(t_k)$, $K\in\R^{m\times n}$. The input is then held constant in between sampling instants $u(t)=u(t_k), \; t\in \N_{[t_k,t_{k+1}-1]}$. We summarize the closed-loop system under aperiodic sampling as
\begin{align}
x(t\hspace{-1pt}+\hspace{-1pt}1) &= A_\text{tr} x(t) \hspace{-1pt}+\hspace{-1pt} B_\text{tr} K x(t_k), \: \forall t\in\N_{[t_k,t_{k+1}\hspace{-0.5pt}-\hspace{-0.5pt}1]}, \: \forall k\in\N_{0} \nonumber \\
t_{k+1} &= t_k + h_k, \; h_k\in\N_{[1,\overline{h}]}, \; \forall k\in\N_{0} \label{eq:system_aper_sampled} \\ 
t_0 &= 0, \quad x(0) = x_0. \nonumber
\end{align}

Since as discussed in the introduction, many problems arising in CPS and NCS can be abstracted by an aperiodically sampled system with known upper bound on the sampling interval, we consider stability analysis of \eqref{eq:system_aper_sampled} for a given $\overline{h}$. In addition, we are interested in a preferably tight lower bound on the MSI, i.e., in the maximum $\overline{h}$ such that stability of \eqref{eq:system_aper_sampled} can be guaranteed. This quantity will be denoted by $\overline{h}_\text{MSI}$. Naturally, once tractable conditions for the former problem are found, the latter can be solved via a linear search over $\overline{h}$ \cite[Chapter 6]{knuth1997art} or an exponential search \cite{bentley1976almost}. For this reason, we concentrate on the former problem for the technical results.

We aim to solve these problems both in case of full model knowledge, i.e., when the true matrices $A_\text{tr}$ and $B_\text{tr}$ are known with certainty, and in the pure data-driven case. Therein, we have access only to a finite-length state-input trajectory of the system, which is additionally subject to noise, whereas $A_\text{tr}$ and $B_\text{tr}$ are entirely unknown. Although not presented here, we expect that extending our results to the case when both data \emph{and} some prior knowledge on $A_\text{tr}$ and $B_\text{tr}$ are available is straightforward using the framework in \cite{berberich2020combining}.

\section{Robust input/output approach} \label{sec:IO}

\subsection{Main Idea} \label{sec:IO_mainidea}

First, we rewrite the closed-loop aperiodically sampled system \eqref{eq:system_aper_sampled} as a time-delay system
\begin{equation*} \label{eq:system_delayed}
\begin{aligned}
x(t+1) &= (A_\text{tr} + B_\text{tr}K) x(t) + B_\text{tr}K (x(t_k)-x(t)) \\
&= (A_\text{tr} + B_\text{tr}K) x(t) + B_\text{tr}K (x(t-\tau(t))-x(t))
\end{aligned}
\end{equation*}
for $t\in\N_{[t_k,t_{k+1}-1]}$, $k\in\N_0$. We define
\begin{equation*}
\tau(t)\coloneqq t-t_k, \; t\in\N_{[t_k,t_{k+1}-1]}, \; k\in\N_0,
\end{equation*}
which is the amount of time by which the feedback information is delayed at time $t$. Note that the \textit{delay sequence} $\{\tau(t)\}$ has a ``sawtooth shape'' as illustrated in Figure \ref{fig:delay_staircase}, i.e., it is reset to zero at sampling instants and is increased by one at all the other time instants.

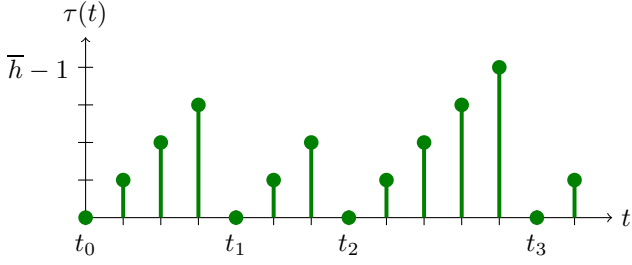
\begin{figure}
	\begin{tikzpicture}
	\draw[->] (0,0) -- (7,0) node[right] {$t$};
	\draw[->] (0,0) -- (0,2.4) node[above] {$\tau(t)$};
	\draw (-0.1,2) node[left] {$\overline{h}-1$} -- (0.1,2);
	\draw (0,-0.1) node[below] {$t_0$} -- (0,0.1);
	\draw (2,-0.1) node[below] {$t_1$} -- (2,0.1);
	\draw (3.5,-0.1) node[below] {$t_2$} -- (3.5,0.1);
	\draw (6,-0.1) node[below] {$t_3$} -- (6,0.1);
	
	\foreach \x in {0.5,1,1.5,2.5,3,4,4.5,5,5.5,6.5}
	\draw (\x,-0.1) -- (\x,0.1);
	
	\foreach \y in {0.5,1,1.5}
	\draw (-0.1,\y) -- (0.1,\y);
	
	\draw[color=istgreen,line width = 1.5pt] plot[ycomb,mark=*] coordinates {(0,0) (0.5,0.5) (1,1) (1.5,1.5) (2,0)  (2.5,0.5)(3,1) (3.5,0) (4,0.5) (4.5,1) (5,1.5) (5.5,2) (6,0) (6.5,0.5) };
	\end{tikzpicture}
	\caption{Staircase shape of $\tau(t)$.}
	\label{fig:delay_staircase}
\end{figure}

The quantity $e(t) \coloneqq x(t-\tau(t)) - x(t)$ can be interpreted as the ``error'' resulting from the aperiodic sampling. Note that it can be represented by a telescopic sum
\begin{align*}
e(t) & = x(t-\tau(t))-x(t) =  x(t-\tau(t)) \nonumber \\
&-x(t-\tau(t)+1)+x(t-\tau(t)+1)-\ldots-x(t) \nonumber \\
&= \sum_{i=t-\tau(t)}^{t-1} x(i)-x(i+1).
\end{align*}
We introduce the artificial output $y(t) \coloneqq x(t)-x(t+1)$ and define the \textit{delay operator} $\Delta:\ell_{2e}^n\to\ell_{2e}^n$, $e=\Delta y$, as
\begin{equation*} \label{eq:def_delta}
e(t) = (\Delta y)(t) \coloneqq \sum_{i= t-\tau(t)}^{t-1}  y(i), \; t\in\N_{[t_k,t_{k+1}-1]}, \; k\in\N_0,
\end{equation*}
in order to write the aperiodically sampled system \eqref{eq:system_aper_sampled} as an interconnection of an LTI system and the delay operator
\begin{subequations} \label{eq:fb_interconnection_ss}
	\begin{align}
	\begin{bmatrix}
	x(t+1) \\ y(t)
	\end{bmatrix}&=
	\begin{bmatrix}
	A_\text{tr}+B_\text{tr}K& B_\text{tr}K \\
	I-A_\text{tr}-B_\text{tr}K&-B_\text{tr}K
	\end{bmatrix}
	\begin{bmatrix}
	x(t)\\ e(t)\end{bmatrix}, \label{eq:fb_interc_syst} \\[3pt]
	e(t)&=(\Delta y)(t). \label{eq:fb_interc_delay}
	\end{align}
\end{subequations}

The main idea is now to analyze stability of the feedback interconnection \eqref{eq:fb_interconnection_ss} using robust control theory, where the sampling-induced error $e$ is comprehended as a disturbance acting on the ``nominal'' LTI system \eqref{eq:fb_interc_syst}. To this end, the delay operator $\Delta$ is embedded into a class of uncertainties by describing its input/output behavior using IQCs. In this respect, it is important to note that the tighter the uncertainty description of $\Delta$ is, the less conservative the resulting stability conditions will be. In the following, we will verify that the delay operator satisfies input-feedforward passivity. This is novel for the discrete-time setup and resembles the results of \cite{fujioka2009IQC}, wherein a similar property for the continuous-time equivalent of $\Delta$ was established. Furthermore, we assert that the delay operator's $\ell_2$ gain is determined by the maximum eigenvalue of a certain matrix, whereby we improve (i.e., decrease) the $\ell_2$ gain estimate from \cite{wildhagen2021dataMSI}. Subsequently, we express both conditions in terms of a hard static IQC.

\subsection{Input/Output Properties of the Delay Operator} \label{sec:IO_delay_operator}

In this subsection, we verify input-feedforward passivity and compute the $\ell_2$ gain of the delay operator, which are the main technical contributions of this paper. Input-feedforward passivity is established in the following result.
\begin{lemma} \label{lem:passive}
	For all $\mathcal{Y}=\mathcal{Y}^\top \succeq 0$ and for all $y\in\ell_{2}^n$, $e=\Delta y$, the delay operator $\Delta$ satisfies
	\begin{equation} \label{eq:passive}
	\sum_{t=0}^{T} \left(y(t)^\top \mathcal{Y} e(t) + \frac{1}{2}y(t)^\top \mathcal{Y} y(t)\right) \ge 0, \quad\forall T\in\N_0.
	\end{equation}
\end{lemma}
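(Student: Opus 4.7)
The plan is to exploit the sawtooth structure of the delay $\tau(t)$: between sampling instants the operator $\Delta$ decouples into independent "blocks," one per interval $[t_k,t_{k+1}-1]$, and within each block a symmetry/completion-of-squares argument produces a non-negative residual. The key observation is that on such an interval, $\tau(t)=t-t_k$, so $e(t)=\sum_{i=t_k}^{t-1} y(i)$, and in particular $e(t_k)=0$. Hence the running sum never couples $y$ across distinct sampling intervals.

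First, I would fix $T\in\N_0$, let $K$ be the unique index with $T\in\N_{[t_K,t_{K+1}-1]}$, and split
\[
\sum_{t=0}^{T}\!\Big(y(t)^\top\mathcal{Y} e(t)+\tfrac12 y(t)^\top\mathcal{Y} y(t)\Big)
=\sum_{k=0}^{K-1}\sum_{t=t_k}^{t_{k+1}-1}(\cdot)\;+\;\sum_{t=t_K}^{T}(\cdot),
\]
and show that each (sub-)interval contribution is individually non-negative. Because of the decoupling noted above, the partial last interval is structurally identical to a complete one with $t_{k+1}-1$ replaced by $T$, so it suffices to prove the claim for a single block.

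Second, on a generic block $[a,b]$ with $e(t)=\sum_{i=a}^{t-1} y(i)$, I would expand
\[
\sum_{t=a}^{b} y(t)^\top \mathcal{Y} e(t)=\sum_{a\le i<t\le b} y(t)^\top \mathcal{Y} y(i).
\]
Using $\mathcal{Y}=\mathcal{Y}^\top$, this off-diagonal double sum equals
\[
\tfrac12\Bigl(\sum_{t=a}^{b} y(t)\Bigr)^{\!\top}\!\mathcal{Y}\Bigl(\sum_{t=a}^{b} y(t)\Bigr) \;-\;\tfrac12\sum_{t=a}^{b} y(t)^\top\mathcal{Y} y(t),
\]
which is the standard "square of the sum equals sum of squares plus twice the cross terms" identity rearranged. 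Adding the $\tfrac12 y(t)^\top\mathcal{Y} y(t)$ contributions from \eqref{eq:passive} cancels the diagonal term exactly, leaving
\[
\tfrac12\Bigl(\sum_{t=a}^{b} y(t)\Bigr)^{\!\top}\!\mathcal{Y}\Bigl(\sum_{t=a}^{b} y(t)\Bigr)\ge 0,
\]
where non-negativity follows from $\mathcal{Y}\succeq 0$. Summing this over all blocks $k=0,\dots,K-1$ plus the (structurally identical) partial block $[t_K,T]$ yields \eqref{eq:passive}.

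There is no real obstacle here beyond careful bookkeeping; the only subtlety is handling the partial terminal block, which I would address by noting that the block-decoupling argument never used $b=t_{k+1}-1$ specifically, only that $e(t)=\sum_{i=a}^{t-1}y(i)$ throughout $[a,b]$, which holds on $[t_K,T]$ as well. The whole argument also makes it transparent why symmetry of $\mathcal{Y}$ is essential: without it, the off-diagonal sum would not collapse into a square, and no amount of feedforward term on $y^\top\mathcal{Y} y$ could guarantee non-negativity.
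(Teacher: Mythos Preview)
Your argument is correct. Both your proof and the paper's reduce to showing that each sampling-interval block contributes a non-negative amount, and both identify the same non-negative residual, namely
\[
\tfrac12\Bigl(\sum_{i=t_k}^{t} y(i)\Bigr)^{\!\top}\mathcal{Y}\Bigl(\sum_{i=t_k}^{t} y(i)\Bigr).
\]
The difference lies in how this residual is reached. The paper proceeds in a summation-by-parts style: it uses the increment identity $(\Delta y)(i+1)-(\Delta y)(i)=y(i)$, rewrites the cross term as a telescoping sum, and applies two successive completions of squares before recognizing $(\Delta y)(t)+y(t)=\sum_{i=t_k}^{t}y(i)$. Your route is more direct: you expand $\sum_t y(t)^\top\mathcal{Y} e(t)$ as an off-diagonal double sum and invoke the single identity ``square of the sum equals sum of squares plus twice the cross terms,'' which immediately cancels against the feedforward term. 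Your approach is shorter and makes the role of the symmetry of $\mathcal{Y}$ more transparent; the paper's telescoping argument, on the other hand, mirrors the integration-by-parts proofs used for the continuous-time analogue and would generalize more naturally to dynamic (filtered) multipliers. Your treatment of the partial terminal block is also fine and matches what the paper does implicitly by proving the block inequality for every $t\in\N_{[t_k,t_{k+1}-1]}$, not just the endpoint.
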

\noindent The proof can be found in Appendix \ref{app:proof_lem_passive}.

\begin{remark}
	By setting $\mathcal{Y}\coloneqq I$ in \eqref{eq:passive}, we retain the classical condition for input-feedforward passivity of $\Delta$ (cf. \cite{khalil2002nonlinear,kottenstette2014relationships}). We consider a general multiplier $\mathcal{Y}\succeq 0$, since this will enable us to formulate less conservative stability conditions later.
\end{remark}

\begin{remark}
	It is clear that \eqref{eq:passive} holds as well if the factor $\frac{1}{2}$ is replaced by an arbitrary $c\ge\frac{1}{2}$. From the proof of Lemma \ref{lem:passive}, it is easy to see that $\frac{1}{2}$ is indeed the smallest possible factor such that input-feedforward passivity holds.
\end{remark}

\begin{remark}
	Interestingly, the continuous-time equivalent of the delay operator $\Delta$ was proven to satisfy passivity without the feedforward term \cite{fujioka2009IQC}, which is a stronger property than \eqref{eq:passive}. Nonetheless, input-feedforward passivity will be useful for stability analysis as we will see later.
\end{remark}

To characterize the $\ell_2$ gain of $\Delta$, we define the matrix
\begin{equation*}
E_{\overline{h}} \coloneqq \begin{bmatrix}
0 & \cdots & \cdots & \cdots & \cdots & 0 \\
\vdots & 1 & \cdots & \cdots & \cdots & 1 \\
\vdots & \vdots & 2 & \cdots & \cdots & 2 \\
\vdots & \vdots & \vdots & \ddots &  & \vdots \\
\vdots & \vdots & \vdots & & \overline{h}-2 & \overline{h}-2 \\
0 & 1 & 2 & \cdots & \overline{h}-2 & \overline{h}-1 \\
\end{bmatrix}\in\R^{\overline{h}\times\overline{h}}.
\end{equation*}

\begin{lemma} \label{lem:L2_gain}
	The delay operator $\Delta$ has $\ell_2$ gain $\lVert \Delta\rVert_{\ell_2}=\sqrt{\lambda_\text{max}(E_{\overline{h}})}$, i.e., for all $y\in\ell_{2}^n$, $e=\Delta y$, we have
	\begin{equation*}
	\sum_{t=0}^{T} e(t)^\top e(t) \le \lambda_\text{max}(E_{\overline{h}}) \sum_{t=0}^{T} y(t)^\top y(t), \quad\forall T\in\N_0.
	\end{equation*}
\end{lemma}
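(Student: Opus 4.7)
The key observation is that $e(t_k)=0$ at every sampling instant, so on each sampling interval the delay operator acts as an independent finite-dimensional linear map, and the sums defining the $\ell_2$ norms decouple across intervals. Fix $k$ and stack $Y_k\coloneqq[y(t_k)^\top\ \cdots\ y(t_{k+1}-1)^\top]^\top$ and $E_k$ analogously; unrolling the definition of $\Delta$ gives $E_k=(L_{h_k}\otimes I_n)Y_k$, where $L_h\in\R^{h\times h}$ is the strictly lower-triangular matrix whose sub-diagonal entries are all equal to one. Consequently
\[
\sum_{t=t_k}^{t_{k+1}-1} e(t)^\top e(t) = Y_k^\top\left(L_{h_k}^\top L_{h_k}\otimes I_n\right)Y_k \le \lambda_\text{max}\left(L_{h_k}^\top L_{h_k}\right)\sum_{t=t_k}^{t_{k+1}-1} y(t)^\top y(t).
\]

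Next I would establish the monotonicity $\lambda_\text{max}(L_{h'}^\top L_{h'})\le\lambda_\text{max}(L_{\overline{h}}^\top L_{\overline{h}})$ for any $h'\le\overline{h}$: since $L_{h'}$ sits as the upper-left principal submatrix of $L_{\overline{h}}$, zero-padding a vector from $\R^{h'}$ to $\R^{\overline{h}}$ can only enlarge $\lVert L\cdot\rVert_2$. Summing the per-interval estimate over all complete sampling intervals in $[0,T]$ and, if $T$ lies strictly inside an interval of residual length $h'=T-t_k+1\le\overline{h}$, handling the tail by exactly the same argument applied to $L_{h'}$, yields the truncated inequality with universal constant $\lambda_\text{max}(L_{\overline{h}}^\top L_{\overline{h}})$.

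It remains to identify this number with $\lambda_\text{max}(E_{\overline{h}})$. A direct calculation gives $(L_{\overline{h}}^\top L_{\overline{h}})_{ij}=\overline{h}-\max(i,j)$, whereas $(E_{\overline{h}})_{ij}=\min(i-1,j-1)$; under the orthogonal flip permutation $J\in\R^{\overline{h}\times\overline{h}}$ that reverses the index ordering, one checks $E_{\overline{h}}=J\,L_{\overline{h}}^\top L_{\overline{h}}\,J$, so both matrices share the same spectrum. For tightness of the $\ell_2$ gain (i.e., that the bound is actually attained, justifying writing $=$ rather than $\le$), I would construct a worst-case input: take a sampling pattern with $h_0=\overline{h}$, pick a unit top eigenvector $v\in\R^{\overline{h}}$ of $L_{\overline{h}}^\top L_{\overline{h}}$ and any unit $u\in\R^n$, and set $y(t)=v_{t+1}u$ for $t\in\N_{[0,\overline{h}-1]}$ and $y(t)=0$ elsewhere; truncating at $T=\overline{h}-1$ turns the inequality into an equality.

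The main obstacle I expect is careful bookkeeping: making the partial-interval tail argument precise when $T$ is not a sampling instant (one must remember that $e(t)$ depends only on $y(i)$ for $i<t$ within the current interval, so the tail map is literally $L_{h'}\otimes I_n$ and nothing more), and verifying the flip-similarity between $L_{\overline{h}}^\top L_{\overline{h}}$ and $E_{\overline{h}}$. Both are essentially mechanical once the indexing is set up, but they are where a sloppy calculation could easily introduce off-by-one errors.
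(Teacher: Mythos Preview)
Your proof is correct and follows essentially the same lifting approach as the paper: reduce to a single interval of maximal length, represent the action of $\Delta$ by the strictly lower-triangular all-ones matrix tensored with $I_n$, and read off the squared gain as its largest singular value squared (the paper outsources the reduction step to \cite[Lemma~4]{wildhagen2021dataMSI}, whereas you carry it out explicitly via the interval decoupling, principal-submatrix monotonicity, and tail argument). One small simplification over your flip-similarity step: since $\lambda_{\max}(L_{\overline{h}}^\top L_{\overline{h}})=\lambda_{\max}(L_{\overline{h}} L_{\overline{h}}^\top)$ and $(L_{\overline{h}} L_{\overline{h}}^\top)_{ij}=\#\{k:k<\min(i,j)\}=\min(i,j)-1=(E_{\overline{h}})_{ij}$, you get $L_{\overline{h}} L_{\overline{h}}^\top=E_{\overline{h}}$ directly, which is exactly what the paper uses.
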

\noindent The proof can be found in Appendix \ref{app:proof_lem_L2}.

Note that $\lambda_\text{max}(E_{\overline{h}})$, as given in Lemma \ref{lem:L2_gain}, is a tight bound on the (squared) $\ell_2$ gain of $\Delta$. The following result shows that it is strictly smaller than $\frac{\overline{h}}{2}(\overline{h}-1)$, which is the (squared) $\ell_{2}$ gain bound previously given in \cite[Lemma 4]{wildhagen2021dataMSI}.
\begin{proposition} \label{prop:guaranteed_imprv_l2_gain}
	It holds that
	\begin{equation} \label{eq:guaranteed_imprv_l2_gain}
	\lambda_\text{max}(E_{\overline{h}})\hspace{-1pt}\le\hspace{-1pt}\lVert E_{\overline{h}}\rVert_F\hspace{-1pt}=\hspace{-1pt}\sqrt{\frac{1}{6}(\overline{h}-1)\overline{h}(\overline{h}^2\hspace{-2pt}-\hspace{-0pt}\overline{h}+1)}\hspace{-1pt}\le\hspace{-1pt}\frac{\overline{h}}{2}(\overline{h}-1).
	\end{equation} Further, $\lambda_\text{max}(E_{\overline{h}})<\frac{\overline{h}}{2}(\overline{h}-1)$ for all $\overline{h}\in\N_{\ge 3}$.
\end{proposition}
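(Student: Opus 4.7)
The plan is to establish the three components of \eqref{eq:guaranteed_imprv_l2_gain} independently---the spectral bound $\lambda_\text{max}(E_{\overline{h}})\le\lVert E_{\overline{h}}\rVert_F$, the closed form of the Frobenius norm, and the upper bound $\lVert E_{\overline{h}}\rVert_F\le\tfrac{\overline{h}}{2}(\overline{h}-1)$---and then to combine them to obtain the strict inequality for $\overline{h}\in\N_{\ge 3}$.

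For the first inequality, the key observation is that $E_{\overline{h}}$ is symmetric with entries $E_{\overline{h}}(i,j)=\min(i,j)-1$. Its first row and column are zero, and the remaining $(\overline{h}-1)\times(\overline{h}-1)$ block is the classical $\min$ matrix with entries $\min(i',j')$. This block is positive semi-definite, since it admits the outer-product decomposition $\sum_{k=1}^{\overline{h}-1}v_k v_k^\top$ with $v_k=(\mathbf{1}_{i'\ge k})_{i'}$. Hence $E_{\overline{h}}\succeq 0$, all of its eigenvalues are non-negative, and the standard identity $\lVert E_{\overline{h}}\rVert_F^2=\sum_i \lambda_i(E_{\overline{h}})^2$ immediately yields $\lambda_\text{max}(E_{\overline{h}})\le \lVert E_{\overline{h}}\rVert_F$.

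For the Frobenius-norm equality I would count entries: for each $k\in\{0,\ldots,\overline{h}-1\}$ the value $\min(i,j)-1=k$ is attained by exactly $2(\overline{h}-k)-1$ index pairs, so
\begin{equation*}
\lVert E_{\overline{h}}\rVert_F^2=\sum_{k=0}^{\overline{h}-1}k^2\bigl(2(\overline{h}-k)-1\bigr)=(2\overline{h}-1)\sum_{k=0}^{\overline{h}-1}k^2-2\sum_{k=0}^{\overline{h}-1}k^3.
\end{equation*}
Inserting the standard closed forms for these two power sums and factoring out $\tfrac{(\overline{h}-1)\overline{h}}{6}$ collapses the right-hand side to the claimed expression $\tfrac{1}{6}(\overline{h}-1)\overline{h}(\overline{h}^2-\overline{h}+1)$.

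For the remaining bound it suffices to compare squares. After dividing through by the positive factor $\overline{h}(\overline{h}-1)$ (the case $\overline{h}=1$ being trivial), the assertion $\lVert E_{\overline{h}}\rVert_F^2\le\bigl(\tfrac{\overline{h}}{2}(\overline{h}-1)\bigr)^2$ reduces to $(\overline{h}^2-\overline{h}+1)/6\le\overline{h}(\overline{h}-1)/4$, and, after clearing denominators, to $\overline{h}(\overline{h}-1)\ge 2$. This holds with equality at $\overline{h}=2$ and strictly for $\overline{h}\ge 3$, so chaining with the first inequality yields $\lambda_\text{max}(E_{\overline{h}})<\tfrac{\overline{h}}{2}(\overline{h}-1)$ on $\N_{\ge 3}$. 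The only step requiring any real care is the counting argument underlying the Frobenius calculation; aside from that, I expect no genuine obstacle.
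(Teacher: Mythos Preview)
Your proof is correct and follows essentially the same three-step structure as the paper: bound $\lambda_{\max}$ by $\lVert E_{\overline{h}}\rVert_F$ via positive semi-definiteness, compute the Frobenius norm explicitly, and compare it to $\tfrac{\overline{h}}{2}(\overline{h}-1)$. The minor differences are purely cosmetic---the paper obtains $E_{\overline{h}}\succeq 0$ from the factorization $E_{\overline{h}}\otimes I=\underline{D}_{\overline{h}}\underline{D}_{\overline{h}}^\top$ established in the proof of Lemma~\ref{lem:L2_gain} rather than via your $\min$-matrix decomposition, evaluates $\lVert E_{\overline{h}}\rVert_F^2$ as $\mathrm{trace}(E_{\overline{h}}^\top E_{\overline{h}})$ row-by-row rather than by counting entries of a given value, and handles the last inequality by showing the ratio $f(\overline{h})=\lVert E_{\overline{h}}\rVert_F/\bigl(\tfrac{\overline{h}}{2}(\overline{h}-1)\bigr)$ is monotone with $f(3)<1$ instead of your direct algebraic reduction to $\overline{h}(\overline{h}-1)\ge 2$---but none of these changes the substance of the argument.
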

\noindent The proof can be found in Appendix \ref{app:proof_prop_guaranteed_improvement}.

The middle term in \eqref{eq:guaranteed_imprv_l2_gain} comes from the fact that the maximum eigenvalue of a positive semi-definite matrix is upper bounded by its Frobenius norm. The latter, in contrast to the maximum eigenvalue itself, allows for a simple expression of the dependency on $\overline{h}$. Proposition \ref{prop:guaranteed_imprv_l2_gain} also reeals that the quotient of the Frobenius estimate and the estimate given in \cite{wildhagen2021dataMSI} converges to $\frac{\sqrt{2}}{\sqrt[4]{6}}\approx 0.9036$ as $\overline{h}\to\infty$. Numerically, we observed that the quotient of the estimate in Lemma \ref{lem:L2_gain} and the one given in \cite{wildhagen2021dataMSI} converges to approximately $0.9003$.

Both input-feedforward passivity in Lemma \ref{lem:passive} and the $\ell_2$ gain in Lemma \ref{lem:L2_gain} can be expressed by a hard static IQC.

\begin{corollary} \label{cor:combined_IQC}
	For any	$\mathcal{X}=\mathcal{X}^\top \succ 0$ and $\mathcal{Y}=\mathcal{Y}^\top \succeq 0$, it holds that $\Delta\in\text{IQC}(\Pi)$, where
	\begin{equation*}
	\Pi\coloneqq\begin{bmatrix} \lambda_{\text{max}}(E_{\overline{h}}) \mathcal{X}+\mathcal{Y} & \mathcal{Y} \\
	\mathcal{Y} & - \mathcal{X}
	\end{bmatrix}.
	\end{equation*}
\end{corollary}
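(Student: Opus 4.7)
The plan is to expand the quadratic form in the IQC definition along the block structure of $\Pi$ and then recognize the two resulting summands as (weighted versions of) the inequalities from Lemmas~\ref{lem:passive} and \ref{lem:L2_gain}. Explicitly, for any $y\in\ell_2^n$ with $e=\Delta y$,
\begin{align*}
\begin{bmatrix} y(t) \\ e(t) \end{bmatrix}^\top \Pi \begin{bmatrix} y(t) \\ e(t) \end{bmatrix}
&= \lambda_{\max}(E_{\overline{h}})\,y(t)^\top \mathcal{X} y(t) - e(t)^\top \mathcal{X} e(t) \\
&\quad + 2\left( y(t)^\top \mathcal{Y} e(t) + \tfrac{1}{2} y(t)^\top \mathcal{Y} y(t) \right),
\end{align*}
so summation from $t=0$ to $T$ yields two terms, and the task reduces to showing that each is nonnegative for every $T\in\N_0$.

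For the second term, nonnegativity follows immediately from Lemma~\ref{lem:passive} applied with the multiplier $\mathcal{Y}\succeq 0$. The first term requires a weighted version of Lemma~\ref{lem:L2_gain}. The key observation is that $\Delta$ commutes with any constant linear map: since
\begin{equation*}
L e(t) = L\!\!\sum_{i=t-\tau(t)}^{t-1}\!\! y(i) = \sum_{i=t-\tau(t)}^{t-1}\!\! (L y)(i) = (\Delta L y)(t),
\end{equation*}
one has $L\Delta = \Delta L$ for any $L\in\R^{n\times n}$. I would therefore factor $\mathcal{X} = L^\top L$ (which is possible since $\mathcal{X}\succ 0$) and set $\tilde y\coloneqq Ly\in\ell_2^n$, $\tilde e\coloneqq Le = \Delta\tilde y$. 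Applying Lemma~\ref{lem:L2_gain} to $\tilde y$ gives
\begin{equation*}
\sum_{t=0}^T \tilde e(t)^\top \tilde e(t) \le \lambda_{\max}(E_{\overline{h}}) \sum_{t=0}^T \tilde y(t)^\top \tilde y(t),
\end{equation*}
which, after substituting back, is exactly $\sum_{t=0}^T e(t)^\top \mathcal{X} e(t) \le \lambda_{\max}(E_{\overline{h}}) \sum_{t=0}^T y(t)^\top \mathcal{X} y(t)$, i.e.\ nonnegativity of the first term.

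Combining both nonnegativities establishes inequality \eqref{eq:def_IQC} with the stated multiplier $\Pi$, proving $\Delta\in\mathrm{IQC}(\Pi)$. There is no substantive obstacle here, since both Lemmas~\ref{lem:passive} and \ref{lem:L2_gain} have already been established; the only mildly delicate step is the extension from the unweighted $\ell_2$-gain bound to a $\mathcal{X}$-weighted version, and this is handled cleanly by the commutation property $L\Delta=\Delta L$ together with the factorization $\mathcal{X}=L^\top L$. Truncation at $T$ and signals in $\ell_{2e}$ are handled automatically because both source lemmas are stated as finite-horizon summations that hold for every $T\in\N_0$.
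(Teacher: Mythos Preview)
Your proof is correct and follows essentially the same decomposition as the paper's own argument: split $\Pi$ into the $\ell_2$-gain part $\Pi_{\ell_2}$ and the passivity part $\Pi_P$, verify each IQC separately via Lemmas~\ref{lem:L2_gain} and \ref{lem:passive}, and then add them. The only difference is presentational: the paper outsources the weighted $\ell_2$-gain IQC (i.e., the appearance of a general $\mathcal{X}\succ 0$ rather than $I$) to \cite[Corollary~7]{wildhagen2021dataMSI} and the conic-combination step to \cite[Remark~2]{kao2012discrete_delayed_IQC}, whereas you supply both steps explicitly---in particular, your commutation argument $L\Delta=\Delta L$ together with $\mathcal{X}=L^\top L$ is exactly what lies behind the cited corollary, so your version is a self-contained instance of the same proof.
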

\begin{proof}
	Lemma \ref{lem:passive} implies  $\Delta\in\text{IQC}(\Pi_{P})$, while Lemma \ref{lem:L2_gain} implies $\Delta\in\text{IQC}(\Pi_{\ell_2})$ (\cite[Corollary 7]{wildhagen2021dataMSI}), where
	\begin{equation*}
	\Pi_P \coloneqq \begin{bmatrix}	\mathcal{Y} & \mathcal{Y} \\ \mathcal{Y} & 0 \end{bmatrix} \text{ and }
	\Pi_{\ell_2} \coloneqq	\begin{bmatrix}	\lambda_{\text{max}}(E_{\overline{h}})\mathcal{X} & 0 \\ 0 &  -\mathcal{X} \end{bmatrix}.
	\end{equation*}
	The statement follows from the fact that $\Delta\in\text{IQC}(\nu_{\ell_2}\Pi_{\ell_2}+\nu_P \Pi_P)$ for all $\nu_{\ell_2},\nu_P\in\R_{\ge 0}$ (cf. \cite[Remark 2]{kao2012discrete_delayed_IQC}).
\end{proof}

\section{Stability criteria} \label{sec:stab}

\subsection{Model-based} \label{sec:stab_stab_model}

In this subsection, we assume that the true system matrices $A_\text{tr}$ and $B_\text{tr}$ are known. Having established a hard static IQC for the delay operator $\Delta$ in Corollary \ref{cor:combined_IQC}, one may now use existing results in robust control theory \cite{scherer2021dissipativity} to provide stability conditions for \eqref{eq:fb_interconnection_ss}. The following result follows directly from \cite[Corollary 11]{scherer2021dissipativity}.
\begin{theorem} \label{thm:stab_IO_ss}
	Suppose there exist matrices $\mathcal{S}=\mathcal{S}^\top\succ 0\in\R^{n\times n}$, $\mathcal{X} = \mathcal{X}^\top \succ 0\in\R^{n\times n}$ and $\mathcal{Y}=\mathcal{Y}^\top \succeq 0\in\R^{n\times n}$ such that \eqref{eq:stab_cond_IO_ss} is satisfied. Then the origin of \eqref{eq:fb_interconnection_ss} is asymptotically stable.
	\begin{figure*}%[b]%% over both columns
		\vspace{2pt}
		\begin{align}\label{eq:stab_cond_IO_ss}
		\left[
		\begin{array}{cc}
		A_\text{tr}+B_\text{tr}K&B_\text{tr}K\\I&0\\\hline
		I-A_\text{tr}-B_\text{tr}K&-B_\text{tr}K\\0&I
		\end{array}
		\right]^\top
		\left[
		\begin{array}{c|c}
		\begin{matrix} \mathcal{S} & 0 \\ 0 & -\mathcal{S} \end{matrix} & \begin{matrix}0 & 0 & \\0 & 0 \end{matrix} \\\hline
		\begin{matrix} 0 & 0 & \\0 & 0 \end{matrix}&\begin{matrix} \lambda_{\text{max}}(E_{\overline{h}})\mathcal{X}+\mathcal{Y} & \mathcal{Y} \\ \mathcal{Y} & -\mathcal{X} \end{matrix}
		\end{array}
		\right]
		\left[
		\begin{array}{cc}
		A_\text{tr}+B_\text{tr}K&B_\text{tr}K\\I&0\\\hline
		I-A_\text{tr}-B_\text{tr}K&-B_\text{tr}K\\0&I
		\end{array}
		\right]
		\prec0.
		\end{align}
		\noindent\makebox[\linewidth]{\rule{\textwidth}{0.4pt}}
	\end{figure*}
\end{theorem}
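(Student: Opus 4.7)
The plan is to follow the standard dissipativity argument for feedback interconnections closed via a hard IQC, using $V(x)=x^\top\mathcal{S}x$ as a Lyapunov function for the LTI part \eqref{eq:fb_interc_syst} and the multiplier $\Pi$ from Corollary \ref{cor:combined_IQC} to absorb the contribution of the delay channel.

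First I would observe that multiplying the matrix inequality \eqref{eq:stab_cond_IO_ss} from the left and right by the vector $[x(t)^\top,e(t)^\top]^\top$ and its transpose precisely reproduces the one-step dissipation inequality
\begin{equation*}
V(x(t+1))-V(x(t))+\begin{bmatrix}y(t)\\ e(t)\end{bmatrix}^\top\Pi\begin{bmatrix}y(t)\\ e(t)\end{bmatrix}\le-\varepsilon\bigl(\lVert x(t)\rVert_2^2+\lVert e(t)\rVert_2^2\bigr)
\end{equation*}
for some $\varepsilon>0$, where $x(t+1)$ and $y(t)$ are generated by \eqref{eq:fb_interc_syst} and $\Pi$ is the multiplier defined in Corollary \ref{cor:combined_IQC}. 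Here strictness of the LMI gives the margin $\varepsilon$, and the block diagonal $\operatorname{diag}(\mathcal{S},-\mathcal{S})$ term in the outer multiplier encodes exactly $V(x(t+1))-V(x(t))$.

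Next I would sum this pointwise inequality from $t=0$ to an arbitrary $T\in\N_0$ along any closed-loop trajectory of \eqref{eq:fb_interconnection_ss}. This yields
\begin{equation*}
V(x(T+1))-V(x_0)+\sum_{t=0}^{T}\begin{bmatrix}y(t)\\ e(t)\end{bmatrix}^\top\Pi\begin{bmatrix}y(t)\\ e(t)\end{bmatrix}\le-\varepsilon\sum_{t=0}^{T}\bigl(\lVert x(t)\rVert_2^2+\lVert e(t)\rVert_2^2\bigr).
\end{equation*}
Invoking $\Delta\in\mathrm{IQC}(\Pi)$ from Corollary \ref{cor:combined_IQC}, applied to the truncated signals $y_T$ and $e_T=\Delta(y)_T$, the IQC sum is nonnegative, so
\begin{equation*}
\varepsilon\sum_{t=0}^{T}\lVert x(t)\rVert_2^2\le V(x_0)-V(x(T+1))\le V(x_0).
\end{equation*}

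Finally, since $T$ was arbitrary and $V\succeq 0$, this implies $x\in\ell_2^n$, and combining with boundedness of the one-step map gives $\lim_{t\to\infty}x(t)=0$; stability in the sense of Lyapunov follows from the standard estimate $V(x(T+1))\le V(x_0)$ and positive definiteness of $\mathcal{S}$. The only subtle point, and the one that has to be handled carefully, is that one cannot a priori assume $y\in\ell_2^n$ before stability has been established; this is precisely why the hard static IQC in Definition \ref{def:IQC}, which holds for every finite truncation, is needed and why the argument is carried out uniformly in $T$. Beyond that the proof is a direct specialization of \cite[Corollary 11]{scherer2021dissipativity} to the present LTI-plus-$\Delta$ setup.
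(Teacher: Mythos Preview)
Your argument is correct and is essentially the same approach as the paper: the paper simply states that the theorem follows directly from \cite[Corollary 11]{scherer2021dissipativity}, and what you have written is exactly the dissipativity/hard-IQC argument that this corollary encodes, specialized to the interconnection \eqref{eq:fb_interconnection_ss} with the multiplier $\Pi$ from Corollary \ref{cor:combined_IQC}. Your handling of the truncation issue (invoking the hard IQC on $y_T\in\ell_2^n$ and using causality of $\Delta$) is the standard and correct way to close the gap between Definition \ref{def:IQC}, which is stated for $y\in\ell_2^n$, and the a priori unknown integrability of the closed-loop output.
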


Condition \eqref{eq:stab_cond_IO_ss} is a linear matrix inequality (LMI) in all variables, such that an efficient search for suitable multipliers $\mathcal{X}$ and $\mathcal{Y}$ is possible via a semi-definite program (SDP). It is also possible to translate the finite-dimensional condition \eqref{eq:stab_cond_IO_ss} into the frequency domain using the KYP lemma \cite{rantzer1996kyp}. Let us denote the transfer function of \eqref{eq:fb_interc_syst} by
\begin{equation*}
G(z) = (I-A_\text{tr}-B_\text{tr}K)(zI-(A_\text{tr}+B_\text{tr}K))^{-1}B_\text{tr}K-B_\text{tr}K.
\end{equation*}
\begin{corollary} \label{cor:stab_IO_freq}
	Suppose $A_\text{tr}+B_\text{tr}K$ is Schur and there exist matrices $\mathcal{X} = \mathcal{X}^\top \succ 0\in\R^{n\times n}$ and $\mathcal{Y}=\mathcal{Y}^\top \succeq 0\in\R^{n\times n}$ such that
	\begin{equation} \label{eq:stab_cond_IO_freq}
	\begin{bmatrix} G(e^{j\omega}) \\ I \end{bmatrix}^*
	\begin{bmatrix} \lambda_{\text{max}}(E_{\overline{h}}) \mathcal{X}+\mathcal{Y} & \mathcal{Y} \\
	\mathcal{Y} & - \mathcal{X}
	\end{bmatrix}
	\begin{bmatrix} G(e^{j\omega}) \\ I \end{bmatrix}
	\prec 0
	\end{equation}
	is satisfied for all $\lvert \omega\rvert\le\pi$. Then, the origin of \eqref{eq:fb_interconnection_ss} is asymptotically stable.
\end{corollary}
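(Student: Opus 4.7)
The plan is to derive Corollary \ref{cor:stab_IO_freq} as a frequency-domain reformulation of Theorem \ref{thm:stab_IO_ss} via the discrete-time KYP lemma \cite{rantzer1996kyp}. The idea is that \eqref{eq:stab_cond_IO_ss} has precisely the form of a time-domain LMI produced by the KYP lemma applied to the transfer function $G$ of \eqref{eq:fb_interc_syst} with multiplier $\Pi$ from Corollary \ref{cor:combined_IQC}, and that under the additional assumption that $A_\text{tr}+B_\text{tr}K$ is Schur, the frequency-domain inequality \eqref{eq:stab_cond_IO_freq} is equivalent to the existence of such a time-domain certificate with $\mathcal{S}\succ 0$.

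Concretely, I would first identify the state-space data of $G$: from \eqref{eq:fb_interc_syst}, $G(z)=C(zI-A)^{-1}B+D$ with $A=A_\text{tr}+B_\text{tr}K$, $B=B_\text{tr}K$, $C=I-A_\text{tr}-B_\text{tr}K$, $D=-B_\text{tr}K$. I would then write $\Pi$ for the $(2n)\times(2n)$ multiplier appearing in \eqref{eq:stab_cond_IO_freq}, which is exactly the one from Corollary \ref{cor:combined_IQC}. Next, I would invoke the discrete-time KYP lemma: since $A$ is Schur, the inequality
\begin{equation*}
\begin{bmatrix} G(e^{j\omega}) \\ I \end{bmatrix}^*\Pi\begin{bmatrix} G(e^{j\omega}) \\ I \end{bmatrix}\prec 0, \quad\forall\lvert\omega\rvert\le\pi
\end{equation*}
is equivalent to the existence of $\mathcal{S}=\mathcal{S}^\top\succ 0\in\R^{n\times n}$ such that
\begin{equation*}
\begin{bmatrix} A & B \\ I & 0 \end{bmatrix}^\top\begin{bmatrix} \mathcal{S} & 0 \\ 0 & -\mathcal{S} \end{bmatrix}\begin{bmatrix} A & B \\ I & 0 \end{bmatrix}+\begin{bmatrix} C & D \\ 0 & I \end{bmatrix}^\top\Pi\begin{bmatrix} C & D \\ 0 & I \end{bmatrix}\prec 0.
\end{equation*}

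Rewriting this sum as a single quadratic form by stacking the two blocks of rows shows that it is identical to the left-hand side of \eqref{eq:stab_cond_IO_ss}. Hence the hypotheses of Theorem \ref{thm:stab_IO_ss} are met with this $\mathcal{S}$, together with the given $\mathcal{X}$ and $\mathcal{Y}$, and asymptotic stability of the origin of \eqref{eq:fb_interconnection_ss} follows.

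The step that requires the most care is verifying that the KYP lemma indeed delivers a positive definite $\mathcal{S}$ under the Schur assumption on $A_\text{tr}+B_\text{tr}K$, rather than merely a symmetric one; this is the classical positive-definite version of the lemma and is the reason we explicitly include the stability hypothesis on $A_\text{tr}+B_\text{tr}K$ in the statement. Beyond that, the remaining work is purely bookkeeping to confirm that the block structure of the KYP LMI matches \eqref{eq:stab_cond_IO_ss} exactly.
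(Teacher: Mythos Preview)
Your proposal is correct and follows exactly the route indicated in the paper, which simply remarks that \eqref{eq:stab_cond_IO_ss} can be translated into the frequency domain via the KYP lemma \cite{rantzer1996kyp} and states the corollary without further proof. Your write-up in fact supplies more detail than the paper does, including the observation that the Schur assumption on $A_\text{tr}+B_\text{tr}K$ is what guarantees $\mathcal{S}\succ 0$ rather than merely $\mathcal{S}=\mathcal{S}^\top$.
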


\subsection{Data-driven} \label{sec:stab_data}

In this subsection, we assume that the true system matrices $A_\text{tr}$ and $B_\text{tr}$ are unknown. Instead, state-input data $\{x(t)\}_{t=0}^{N}$, $\{u(t)\}_{t=0}^{N-1}$, $N\in\N$, of the perturbed system
\begin{equation*} \label{eq:system_data_perturbed}
x(t+1) = A_\text{tr} x(t) + B_\text{tr} u(t) + B_d d(t)
\end{equation*}
are available, where $d(t)\in\R^{n_d}$ is an unknown disturbance and $B_d$ is a known matrix. The disturbance may account for a noise-corrupted experiment and $B_d$ can be used to incorporate prior knowledge about this disturbance, e.g., if it is certain that it only affects a subset of the states. Although these measurements are taken at each of the time instants (and not, e.g., at aperiodic sampling instants), it is not restrictive to assume that such data are available. This is because the state- and input measurements can be buffered at the sensor and actuator, respectively, and extracted once the experiment is finished. Furthermore, note that aperiodic sampling is only relevant for closed-loop operation, while the required data can be obtained in an open-loop experiment.

The particular disturbance sequence $\{\hat{d}(t)\}_{t=0}^{N-1}$ that affected the measured data is unknown, but assumed to satisfy a known bound defined via $\hat{D}\coloneqq \begin{bmatrix} \hat{d}(0) & \cdots & \hat{d}(N-1)\end{bmatrix}$.
\begin{assumption} \label{ass:disturbance_bound}
	The disturbance satisfies $\hat{D}\in\mathcal{D}$, where
	\begin{align*}
	\mathcal{D}\coloneqq\Big\{D\in\mathbb{R}^{n_d\times N}\Bigm|
	\begin{bmatrix}D^\top\\I\end{bmatrix}^\top
	\begin{bmatrix}Q_d&S_d\\S_d^\top&R_d\end{bmatrix}
	\begin{bmatrix}D^\top\\I\end{bmatrix}\succeq0\Big\},
	\end{align*}
	for known $Q_d\prec 0\in\R^{N\times N},S_d\in\R^{N\times n_d},R_d\in\R^{n_d\times n_d}$.
\end{assumption}

Such a disturbance bound may encompass a number of practically relevant scenarios, e.g., a component-wise norm bound or a norm bound on the sequence $\hat{D}$ (cf. \cite{berberich2020combining,waarde2020from}). The following assumption on $B_d$ is essentially without loss of generality (cf. \cite{berberich2020combining}).
\begin{assumption} \label{ass:Bd}
	The matrix $B_d$ has full column rank.
\end{assumption}

Due to the unknown disturbance, an entire range of matrices $A,B$ could explain the recorded data. If we arrange the data according to
\begin{align*}
X^+&=\begin{bmatrix}x(1)&x(2)&\dots&x(N)\end{bmatrix},\\
X&=\begin{bmatrix}x(0)&x(1)&\dots&x(N-1)\end{bmatrix},\\
U&=\begin{bmatrix}u(0)&u(1)&\dots&u(N-1)\end{bmatrix},
\end{align*}
we can define the set of matrices compatible with the data and the disturbance bound as
\begin{equation*}
\Sigma_{AB}=\{\begin{bmatrix} A & B \end{bmatrix}\mid X^+=AX+BU+B_dD,D\in\mathcal{D}\}.
\end{equation*}
Now, if we want to guarantee stability of \eqref{eq:fb_interconnection_ss} despite inexact knowledge of the system matrices, we need to ensure that the model-based stability conditions hold for all $\begin{bmatrix} A & B \end{bmatrix}\in\Sigma_{AB}$. To achieve this, we make use of a data-driven parametrization of the matrices contained in $\Sigma_{AB}$. It follows directly from \cite[Lemma 4]{waarde2020from} or \cite[Lemma 2]{berberich2020combining} that if we define
\begin{equation*}
P_{AB}=\begin{bmatrix} Q_{AB} & S_{AB} \\ S_{AB}^\top & R_{AB} \end{bmatrix}
\coloneqq\left[\begin{array}{cc}
-X&0\\-U&0\\\hline X^+&B_d
\end{array}\right]
\begin{bmatrix}Q_d&S_d\\S_d^\top&R_d\end{bmatrix}
\star^\top,
\end{equation*}
then the set $\Sigma_{AB}$ can be expressed as
\begin{equation} \label{eq:param_IO}
\Sigma_{AB}=\Big\{\begin{bmatrix} A & B \end{bmatrix}\Bigm|
\begin{bmatrix}A^\top\\B^\top\\I\end{bmatrix}^\top
P_{AB}
\begin{bmatrix}A^\top\\B^\top\\I\end{bmatrix}\succeq0\Big\}.
\end{equation}
With this, we have represented the set $\Sigma_{AB}$ by a quadratic matrix inequality (QMI) in the variables $\begin{bmatrix} A & B \end{bmatrix}$. We make the following technical assumption on the matrix $P_{AB}$ involved in the data-driven parametrization \eqref{eq:param_IO}.
\begin{assumption} \label{ass:data_inv}
	The matrix $P_{AB}$ is invertible and has exactly $n_d$ positive eigenvalues.
\end{assumption}
This is typically satisfied in practice if the data are sufficiently rich and $B_d$ is (chosen to be) invertible, implying that $n_d=n$ (cf. \cite{berberich2021aper_samp}).

To be able to state data-driven stability conditions, we rewrite the interconnection of LTI system and delay operator \eqref{eq:fb_interconnection_ss} as a linear fractional transformation (LFT)
\begin{subequations}\label{eq:system_IO_uncertain}
	\begin{align}
	\left[\begin{array}{c}
	x(t+1)\\\hline y(t)\\z(t)
	\end{array}\right]&=
	\left[\begin{array}{c|cc}
	0&0&I\\\hline
	I&0&-I\\
	\begin{bmatrix}I\\K\end{bmatrix}&\begin{bmatrix}0\\K\end{bmatrix}&0
	\end{array}\right]
	\left[\begin{array}{c}
	x(t)\\\hline e(t)\\w(t)\end{array}\right],\\
	e(t)&=(\Delta y)(t),\\
	w(t)&=(\begin{bmatrix} A & B \end{bmatrix}z)(t),
	\end{align}
\end{subequations}
with \emph{two} uncertainty channels $y\mapsto e$ and $z\mapsto w$. The first represents the delay operator and fulfills $\Delta\in\text{IQC}(\Pi)$, while the second describes the uncertain system matrices due to the disturbance in the measured data and satisfies $\begin{bmatrix} A & B \end{bmatrix}\in\Sigma_{AB}$. Using the S-procedure and the QMI representation \eqref{eq:param_IO} of $\Sigma_{AB}$, we can state a data-driven criterion for stability of \eqref{eq:system_IO_uncertain}. As the true system matrices $A_\text{tr}$ and $B_\text{tr}$ are contained in $\Sigma_{AB}$, this directly implies stability of \eqref{eq:fb_interconnection_ss} and hence \eqref{eq:system_aper_sampled}.

\begin{theorem} \label{thm:stab_IO_data}
	Suppose Assumptions \ref{ass:disturbance_bound}, \ref{ass:Bd} and \ref{ass:data_inv} are satisfied. Furthermore, suppose there exist matrices $\mathcal{S}=\mathcal{S}^\top\succ 0\in\R^{n\times n}$, $\mathcal{X}=\mathcal{X}^\top\succ 0\in\R^{n\times n}$ and $\mathcal{Y}=\mathcal{Y}^\top\succeq 0\in\R^{n\times n}$ such that \eqref{eq:stab_cond_IO_data} is satisfied, where
	\begin{align*}
	\begin{bmatrix} \tilde{Q}_{AB} & \tilde{S}_{AB} \\ \tilde{S}_{AB}^\top & \tilde{R}_{AB} \end{bmatrix}\coloneqq\begin{bmatrix} Q_{AB} & S_{AB} \\ S_{AB}^\top & R_{AB} \end{bmatrix}^{-1} = P_{AB}^{-1}.
	\end{align*}
	Then, the origin of \eqref{eq:system_IO_uncertain} is asymptotically stable for any $\begin{bmatrix} A & B \end{bmatrix} \in\Sigma_{AB}$.
	\begin{figure*}%[b]%% over both columns
		\vspace{2pt}
		\begin{align}\label{eq:stab_cond_IO_data}
		\left[\begin{array}{ccc}
		0&0&I\\I&0&0\\\hline
		I&0&-I\\0&I&0\\\hline
		\begin{bmatrix} I \\ K \end{bmatrix}&\begin{bmatrix} 0 \\ K \end{bmatrix}&0\\0&0&I
		\end{array}
		\right]^\top
		\begingroup
		\renewcommand*{\arraystretch}{1.2}
		\left[
		\begin{array}{c|c|c}
		\begin{matrix}\mathcal{S}&0\\0&-\mathcal{S}\end{matrix}&
		\begin{matrix}0&0\\0&0\end{matrix}&\begin{matrix}0&0\\0&0\end{matrix}\\\hline
		\begin{matrix}0&0\\0&0\end{matrix}&\begin{matrix} \lambda_{\text{max}}(E_{\overline{h}})\mathcal{X}+\mathcal{Y} & \mathcal{Y} \\ \mathcal{Y} & -\mathcal{X} \end{matrix}&\begin{matrix}0&0\\0&0\end{matrix}\\\hline
		\begin{matrix}0&0\\0&0\end{matrix}&\begin{matrix}0&0\\0&0\end{matrix}&
		\begin{matrix} -\tilde{Q}_{AB}& \tilde{S}_{AB}\\ \tilde{S}_{AB}^\top& -\tilde{R}_{AB}\end{matrix}
		\end{array}
		\right]
		\endgroup
		\left[\begin{array}{ccc}
		0&0&I\\I&0&0\\\hline
		I&0&-I\\0&I&0\\\hline
		\begin{bmatrix} I \\ K \end{bmatrix}&\begin{bmatrix} 0 \\ K \end{bmatrix}&0\\0&0&I
		\end{array}
		\right]\prec0
		\end{align}
		\noindent\makebox[\linewidth]{\rule{\textwidth}{0.4pt}}
	\end{figure*}
\end{theorem}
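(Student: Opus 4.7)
The plan is to show that the data-based LMI \eqref{eq:stab_cond_IO_data} implies that the model-based LMI \eqref{eq:stab_cond_IO_ss} holds with $A_\text{tr}, B_\text{tr}$ replaced by \emph{any} $\begin{bmatrix} A & B \end{bmatrix} \in \Sigma_{AB}$. Since $\begin{bmatrix} A_\text{tr} & B_\text{tr} \end{bmatrix} \in \Sigma_{AB}$ by construction, asymptotic stability of the closed loop for the true plant (equivalently of \eqref{eq:system_IO_uncertain} with $\begin{bmatrix} A & B \end{bmatrix} = \begin{bmatrix} A_\text{tr} & B_\text{tr} \end{bmatrix}$) will then follow immediately from Theorem \ref{thm:stab_IO_ss}, and repeating the argument for every admissible $\begin{bmatrix} A & B \end{bmatrix}$ yields the stated claim.

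First I would identify the three ``channels'' appearing in \eqref{eq:stab_cond_IO_data}: the Lyapunov block weighted by $\mathrm{diag}(\mathcal{S},-\mathcal{S})$, the delay block carrying the IQC multiplier of Corollary \ref{cor:combined_IQC}, and the system-matrix block built from $P_{AB}^{-1}$. Closing the uncertainty channel $w = \begin{bmatrix} A & B \end{bmatrix} z$ in the LFT \eqref{eq:system_IO_uncertain} recovers \eqref{eq:fb_interconnection_ss} with $A, B$ in place of $A_\text{tr}, B_\text{tr}$. The outer rows $1$--$4$ of the left-multiplying matrix in \eqref{eq:stab_cond_IO_data} then collapse exactly to the left-multiplying matrix of \eqref{eq:stab_cond_IO_ss}, while the rows $5$--$6$, after substitution, form the vector $\begin{bmatrix} \begin{bmatrix} A & B \end{bmatrix}^\top \\ I \end{bmatrix}$ (up to rearrangement of $z=[x^\top, (Kx+Ke)^\top]^\top$), which is precisely the object appearing in the QMI representation \eqref{eq:param_IO} of $\Sigma_{AB}$.

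The key tool is the dualization/full-block S-procedure argument standard in data-driven robust control (cf.\ \cite{berberich2020combining,waarde2020from}). Assumption \ref{ass:data_inv} guarantees that $P_{AB}$ has the signature ($n_d$ positive eigenvalues, the remaining ones negative) required by the dualization lemma, so that membership in $\Sigma_{AB}$ expressed through $P_{AB}$ is equivalent to a dual quadratic inequality expressed through $P_{AB}^{-1}$. Invoking this dualization with unit S-procedure multiplier, the contribution of the $P_{AB}^{-1}$-block in \eqref{eq:stab_cond_IO_data} is, uniformly over $\begin{bmatrix} A & B \end{bmatrix} \in \Sigma_{AB}$, a positive-semidefinite quadratic form on the closed-loop signal that can be subtracted out. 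What remains is exactly \eqref{eq:stab_cond_IO_ss} with $A, B$ in place of $A_\text{tr}, B_\text{tr}$, and Theorem \ref{thm:stab_IO_ss} closes the argument.

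The main obstacle is the sign and signature bookkeeping: one must verify that the minus signs on $\tilde{Q}_{AB}$ and $\tilde{R}_{AB}$ in the $(3,3)$-block of the middle matrix in \eqref{eq:stab_cond_IO_data} are precisely those produced when the dualization lemma is applied to the QMI \eqref{eq:param_IO}, and that Assumption \ref{ass:data_inv} delivers the signature needed so that this dualization is lossless (no conservatism introduced beyond what is already inherent to the S-procedure with a single multiplier). Once these hypotheses are in place, the rest is a direct manipulation of quadratic forms within the LFT framework, and the combination with the IQC for $\Delta$ is handled exactly as in the model-based Theorem \ref{thm:stab_IO_ss}.
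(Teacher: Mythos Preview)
Your proposal is correct and follows essentially the same route as the paper: apply the full-block S-procedure to show that \eqref{eq:stab_cond_IO_data} implies \eqref{eq:stab_cond_IO_ss} (with $A,B$ in place of $A_\text{tr},B_\text{tr}$) for all $\begin{bmatrix} A & B \end{bmatrix}$ satisfying the dual QMI, then use the dualization lemma (whose inertia hypotheses are supplied by Assumption~\ref{ass:data_inv}) to identify this set with $\Sigma_{AB}$, and conclude via Theorem~\ref{thm:stab_IO_ss}. Your discussion of the sign/signature bookkeeping and the role of Assumption~\ref{ass:data_inv} is exactly the point the paper makes compactly.
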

\begin{proof}
	We apply the full-block S-procedure \cite{scherer2001lpv} to conclude that \eqref{eq:stab_cond_IO_data} implies that \eqref{eq:stab_cond_IO_ss} (wherein $A_\text{tr},B_\text{tr}$ are replaced by $A,B$) holds for all $\begin{bmatrix} A & B \end{bmatrix}$ which satisfy
	\begin{equation} \label{eq:param_alt}
	\begin{bmatrix}
	[A\;B] \\ I
	\end{bmatrix}^\top
	\begin{bmatrix} -\tilde{R}_{AB}& \tilde{S}_{AB}^\top\\ \tilde{S}_{AB}& -\tilde{Q}_{AB}\end{bmatrix}
	\begin{bmatrix}
	[A\;B] \\ I
	\end{bmatrix}
	\succeq 0.
	\end{equation}
	Applying the dualization lemma \cite[Lemma 4.9]{scherer2000linear} and using \eqref{eq:param_IO} reveals that $\begin{bmatrix} A & B \end{bmatrix}\in\Sigma_{AB}$ if and only if \eqref{eq:param_alt} holds (the required inertia properties hold by Assumption \ref{ass:data_inv}). Asymptotic stability for all $\begin{bmatrix} A & B \end{bmatrix}\in\Sigma_{AB}$ follows from Theorem \ref{thm:stab_IO_ss}.
\end{proof}

Just like it is the case for \eqref{eq:stab_cond_IO_ss}, Condition \eqref{eq:stab_cond_IO_data} is an LMI in all variables, such that a search for multipliers $\mathcal{X}$ and $\mathcal{Y}$ can be performed via an SDP.
\begin{remark}
	The improved estimate for the $\ell_2$ gain in Lemma \ref{lem:L2_gain} can also be exploited in the model-based and data-based stability conditions \cite[Theorems 8 and 9]{wildhagen2021dataMSI} by replacing $\frac{\overline{h}}{2}(\overline{h}-1)$ with $\lambda_{\text{max}}(E_{\overline{h}})$ therein.
\end{remark}
\begin{remark}
	If $\lambda_\text{max}(E_{\overline{h}})$ is too expensive to be evaluated numerically (which might be the case for large $\overline{h}$), it can be replaced by the Frobenius norm $\sqrt{\frac{1}{6}(\overline{h}-1)\overline{h}(\overline{h}^2-\overline{h}+1)}$ in Theorems \ref{thm:stab_IO_ss} and \ref{thm:stab_IO_data} and Corollary \ref{cor:stab_IO_freq}. In Subsection \ref{sec:IO_delay_operator}, we found that the Frobenius bound is in fact quite tight.
\end{remark}

\section{Examples} \label{sec:ex}

\subsection{Numerical analysis: example from \cite{fridman2010refined}}

In a first example, we analyze the system from \cite{fridman2010refined}, which was discretized with a base period of \SI{0.01}{\second}. The system matrices are then given by
\begin{equation*}
A_\text{tr} = \begin{bmatrix} 1 & 0.010000 \\ 0 & 0.999000 \end{bmatrix} \text{ and }
B_\text{tr} = \begin{bmatrix} 5\times 10^{-6} \\ 1.000\times 10^{-3} \end{bmatrix},
\end{equation*}
where we rounded off after 6 decimal places. As in \cite{fridman2010refined}, we consider the controller $K=-\begin{bmatrix} 3.75 & 11.5 \end{bmatrix}$. The numerical results were obtained using MatlabR2019b, YALMIP~\cite{YALMIP} and Mosek~\cite{MOSEK15}.

We consider a model-based stability analysis first. While using \cite[Theorem 8]{wildhagen2021dataMSI} yields an MSI estimate of $\overline{h}_\text{MSI}=122$, we obtain $\overline{h}_\text{MSI} = 136$ from Theorem \ref{thm:stab_IO_ss}, which is an improvement of approximately \SI{11.5}{\percent}. However, we note that if we fix $\mathcal{Y}=0$ in Theorem \ref{thm:stab_IO_ss}, we still obtain $\overline{h}_\text{MSI} = 136$. This suggests that the passivity multiplier has no effect on the MSI estimate for this particular system, and the observed improvement is only due to the refined $\ell_2$ gain. Nonetheless, in the second example in Subsection \ref{sec:ex_scalar}, we will demonstrate that incorporating passivity can indeed be useful.

Let us now assume that $A_\text{tr}$ and $B_\text{tr}$ are unknown and that we have $N=1000$ state-input measurements $\{x(t)\}_{t=0}^{N}$, $\{u(t)\}_{t=0}^{N-1}$ of the unknown system available. The data-generating input is sampled randomly from $[-10,10]$. We consider a perturbed measurement with a norm-bounded disturbance $\lVert\hat{d}(t)\rVert_{2}\le\overline{d}$ for some $\overline{d}\ge 0$. As discussed in \cite{berberich2020combining}, such a disturbance fulfills Assumption \ref{ass:disturbance_bound} with $Q_d=-I$, $S_d = 0$ and $R_d=\overline{d}^2NI$. Furthermore, we set $B_d = 0.01 I$ such that Assumption \ref{ass:Bd} is fulfilled and a certain $\overline{d}$ corresponds to an input-to-noise ratio of approximately $1/\overline{d}$.

For a data-driven estimation of the MSI, one may use either \cite[Theorem 9]{wildhagen2021dataMSI} or Theorem \ref{thm:stab_IO_data}. In Table \ref{tab:MSI}, the MSI estimates for both possibilities and different disturbance levels $\overline{d}$ are listed. Assumption \ref{ass:data_inv} was fulfilled in all of those cases. First of all, we observe that for small noise levels, both data-driven stability conditions yield an MSI estimate just as high as their model-based counterparts. Furthermore, we recognize that for all disturbance levels, the MSI estimates using Theorem \ref{thm:stab_IO_data} are higher than those using \cite[Theorem 9]{wildhagen2021dataMSI}. For $\overline{d}\ge 0.02$, none of the possibilities were feasible.

Finally, we note that when $\overline{d}$ is too small, $P_{AB}$ will become near singular which leads to numerical issues. In this example, this was the case for $\overline{d}\le 0.0005$. This is problematic only when using Theorem \ref{thm:stab_IO_data}, since the data-based stability conditions in \cite[Theorem 9]{wildhagen2021dataMSI} contain $P_{AB}$ directly instead of its inverse. Note that it would also be possible to replace the inverse of $\Pi_{\ell_{2}}$ in \cite[Theorem 9]{wildhagen2021dataMSI} by the inverse of $\Pi$ from Corollary \ref{cor:combined_IQC}, which would eliminate the numerical issues for small disturbances and would render Assumption \ref{ass:data_inv} needless. However, since the inverse of $\Pi$ depends nonlinearly on $\mathcal{X}$ and $\mathcal{Y}$ (and their inverses), a co-search for these multipliers via an SDP could not be performed in this case.

\begin{table}
	\centering
	\caption{Data-driven MSI estimates $\overline{h}_\text{MSI}$ for different approaches.}
	\begin{tabular}{c|ccccccc}
	$\overline{d}$ & 0.001 & 0.002 & 0.005 & 0.01 & 0.02  \\ \hline
	\cite[Theorem 9]{wildhagen2021dataMSI} & 122 & 122 & 121 & 115 & - \\
	Theorem \ref{thm:stab_IO_data} & 136 & 135 & 134 & 128 & - \\
	\end{tabular}
\label{tab:MSI}
\end{table}

\subsection{Frequency domain analysis: scalar systems} \label{sec:ex_scalar}

In a second example, we consider a scalar system $A_\text{tr}=a\in\R$ and $B_\text{tr}=b\in\R$,
$b\neq0$, and a fixed controller $K=1$. In this case, the transfer function of \eqref{eq:fb_interc_syst} is $G(z) = b\frac{1-z}{z-a-b}$.

With Corollary \ref{cor:stab_IO_freq} and by setting $\mathcal{X}=1$, we have stability if $a+b\in(-1,1)$ and there exists a $\mathcal{Y}\ge 0$ such that
\begin{equation*}
(\lambda_{\text{max}}(E_{\overline{h}})\hspace{-1pt}+\hspace{-1pt}\mathcal{Y})G^*(e^{j\omega})G(e^{j\omega}) \hspace{-1pt}+\hspace{-1pt}\mathcal{Y}(G(e^{j\omega})+G^*(e^{j\omega})\hspace{-1pt})-1\hspace{-1pt}<\hspace{-1pt}0
\end{equation*}
for all $|\omega|\le \pi$. This inequality is satisfied if
\begin{equation*}
(\lambda_{\text{max}}(E_{\overline{h}})+\mathcal{Y})G^*(e^{j\omega})G(e^{j\omega}) +\mathcal{Y}(G(e^{j\omega})+G^*(e^{j\omega}))\le 0.
\end{equation*}
We explicitly compute this term to see that this is the case if
\begin{equation} \label{eq:cond_ex_stability}
\frac{b^2}{1+a+b} \le \frac{b\mathcal{Y}}{\lambda_{\text{max}}(E_{\overline{h}})+\mathcal{Y}}
\end{equation}
holds. This inequality cannot be fulfilled if $b<0$, since $1+a+b>0$ and $\mathcal{Y}\ge 0$. Further, the right-hand side of \eqref{eq:cond_ex_stability} takes values in $[0,b)$ and we have $1+a+b<2$, from which it follows that $b<2$ must hold. As the right-hand side of \eqref{eq:cond_ex_stability} approaches $b$ as $\mathcal{Y}\to\infty$ regardless of $\overline{h}$, \eqref{eq:cond_ex_stability} can be fulfilled if $\frac{b}{1+a+b}<1$, i.e., $1+a> 0$. In summary, using Corollary \ref{cor:stab_IO_freq} we can guarantee stability for an \textit{arbitrary} $\overline{h}\in\N$ if $a\in(-1,1)$, $b\in(0,2)$ and $a+b\in(-1,1)$. In contrast, if we would not make use of input-feedforward passivity of the delay operator $\Delta$ (i.e., set $\mathcal{Y}=0$), \eqref{eq:cond_ex_stability} could never be fulfilled. Then, stability would hold if $\lambda_{\text{max}}(E_{\overline{h}})\lVert G\lVert_{\infty}-1<0$, where $\lVert G\lVert_{\infty}$ denotes the $H_\infty$ norm of $G$. This could only be fulfilled for a finite $\overline{h}$ as $\lVert G\lVert_{\infty}>0$. This shows that incorporating the passivity multiplier can bring a great benefit for estimation of the MSI, and that it may even enable to show stability of sampled-data systems with arbitrarily large sampling periods. For the above system, our results are also an improvement over the switched systems approach in \cite{xiong2007packet_loss,hetel2011discrete}, which becomes intractable for very large values of $\overline{h}$.

\bibliographystyle{IEEEtran}   
\bibliography{Literature}

\appendix

\section{Appendix}

\subsection{Proof of Lemma \ref{lem:passive}} \label{app:proof_lem_passive}

The claim follows directly if we verify that
\begin{equation*}
P(t)\coloneqq\sum_{i=t_k}^{t} y(i)^\top \mathcal{Y} (\Delta y)(i) \ge \sum_{i=t_k}^{t} -\frac{1}{2}y(i)^\top\mathcal{Y}y(i)
\end{equation*}
holds for all $t\in\N_{[t_k,t_{k+1}-1]}$, $k\in\N_0$. To this end, an alternative representation of the delay operator
\begin{equation} \label{eq:def_delta_alt}
(\Delta y)(i) = \sum_{j= t_k}^{i-1}  y(j), \; i\in\N_{[t_k,t_{k+1}-1]}, \; k\in\N_{0}
\end{equation}
will be useful. We start off by partitioning $P(t)$
\begin{equation} \label{eq:pass_delta_sh_1}
P(t)=\sum_{i=t_k}^{t-1} y(i)^\top \mathcal{Y} (\Delta y)(i) + y(t)^\top \mathcal{Y}(\Delta y)(t). 
\end{equation}
If $t=t_k$, $P(t)=0$ since $(\Delta y)(t_k)=0$. Thus, we focus on the case $t\in\N_{[t_k+1,t_{k+1}-1]}$ next. From the alternative definition of $\Delta$ in \eqref{eq:def_delta_alt}, we find
\begin{equation} \label{eq:difference_identity_delta_sh}
(\Delta y)(i+1)-(\Delta y)(i) = \sum_{j=t_k}^{i} y(j) - \sum_{j=t_k}^{i-1} y(j) = y(i) 
\end{equation}
for all $i\in[t_k,t_{k+1}-2]$. We plug \eqref{eq:difference_identity_delta_sh} into \eqref{eq:pass_delta_sh_1} to obtain
\begin{equation} \label{eq:pass_delta_sh_2}
\begin{aligned}
P(t)&=\sum_{i=t_k}^{t-1} \Big((\Delta y)(i+1)-(\Delta y)(i)\Big)^\top \mathcal{Y} (\Delta y)(i) \\
&+ y(t)^\top \mathcal{Y}(\Delta y)(t).
\end{aligned}
\end{equation}
Next, we analyze the summand in \eqref{eq:pass_delta_sh_2}. We add and subtract $\frac{1}{2}(\Delta y)(i+1)^\top\mathcal{Y}(\Delta y)(i+1)$ to rewrite it as
\begin{align*}
&\frac{1}{2}(\Delta y)(i+1)^\top\mathcal{Y}(\Delta y)(i+1)-\frac{1}{2}(\Delta y)(i)^\top\mathcal{Y}(\Delta y)(i) \\
&-\frac{1}{2}(\Delta y)(i+1)^\top\mathcal{Y}(\Delta y)(i+1)+(\Delta y)(i+1)^\top\mathcal{Y}(\Delta y)(i) \\
&-\frac{1}{2}(\Delta y)(i)^\top\mathcal{Y}(\Delta y)(i) \\
&= \frac{1}{2}(\Delta y)(i+1)^\top \mathcal{Y} (\Delta y)(i+1)-\frac{1}{2}(\Delta y)(i)^\top \mathcal{Y} (\Delta y)(i) \\
&-\frac{1}{2}\Big(\star\Big)^\top \mathcal{Y} \underbrace{\Big((\Delta y)(i+1)-(\Delta y)(i)\Big)}_{\stackrel{\eqref{eq:difference_identity_delta_sh}}{=}y(i)},
\end{align*}
where we used a completion of squares. As an intermediate step, let us evaluate the telescopic sum
\begin{align*}
\sum_{i=t_k}^{t-1} &\frac{1}{2}(\Delta y)(i+1)^\top \mathcal{Y} (\Delta y)(i+1)-\frac{1}{2}(\Delta y)(i)^\top \mathcal{Y} (\Delta y)(i) \nonumber \\
&= \frac{1}{2} (\Delta y)(t)^\top \mathcal{Y} (\Delta y)(t) -\frac{1}{2}(\Delta y)(t_k)^\top \mathcal{Y} \underbrace{(\Delta y)(t_k)}_{\stackrel{\eqref{eq:def_delta_alt}}{=}0}.
\end{align*}
With this, we can rewrite $P(t)$ as
\begin{align*}
P(t) &= \frac{1}{2} (\Delta y)(t)^\top \mathcal{Y} (\Delta y)(t) -\frac{1}{2} \sum_{i=t_k}^{t-1} y(i)^\top \mathcal{Y} y(i) \\
& + y(t)^\top \mathcal{Y}(\Delta y)(t).
\end{align*}
Finally, by adding and subtracting $\frac{1}{2} y(t)^\top \mathcal{Y} y(t)$ and using again a completion of squares, we obtain
\begin{align*}
&P(t) = -\frac{1}{2} \sum_{i=t_k}^{t} y(i)^\top \mathcal{Y} y(i)  +\frac{1}{2} \underbrace{\Big(\star\Big)^\top\hspace{-2pt} \mathcal{Y} \Big((\Delta y)(t)+y(t)\Big)}_{\ge 0 \text{ since } \mathcal{Y}\succeq 0}\hspace{-2pt}.
\end{align*}

\subsection{Proof of Lemma \ref{lem:L2_gain}} \label{app:proof_lem_L2}

Analogously to the proof of \cite[Lemma 4]{wildhagen2021dataMSI}, we find that the $\ell_2$ gain of $\Delta$ is equal to the $\ell_2$ gain of $D:\ell_{2}^{n}[0,\overline{h}-1]\to\ell_{2}^{n}[0,\overline{h}-1]$, $y\mapsto e$, $e(t)=(Dy)(t)\coloneqq\sum_{i=0}^{t-1}y(i)$, where $\ell_{2}[0,\overline{h}-1]$ denotes the space of signals of length $\overline{h}$. Next, we determine the $\ell_2$ gain of this operator.

We handle $D$ in the lifted domain. For a signal $g=\{g(0),$ $\ldots,g(\overline{h}-1)\}\in\ell_{2}^{n}[0,\overline{h}-1]$, the \textit{lifted signal} is defined as $\underline{g}\coloneqq\left\{\begin{bmatrix} g(0)^\top & \cdots & g(\overline{h}-1)^\top \end{bmatrix}^\top\right\}$.
Likewise, we may consider a lifted version of the operator $D$, according to $\underline{e}=\underline{D}_{\overline{h}}\: \underline{y}$ and
\begin{equation*}
\underline{D}_{\overline{h}} \coloneqq \begin{bmatrix} 0 & \cdots & 0 & 0 \\ I & \ddots & \vdots & \vdots \\ \vdots & \ddots & 0 & 0 \\ I & \cdots & I & 0 \end{bmatrix}\in\R^{\overline{h}\times\overline{h}}.
\end{equation*}
As lifting preserves the $\ell_2$ norm of signals $\lVert g\rVert_{\ell_2} = \lVert \underline{g}\rVert_{2}$ \cite{chen1995optimal}, the $\ell_2$ gain of $D$ is equal to the matrix 2-norm of $\underline{D}_{\overline{h}}$, i.e., $\lVert D\rVert_{\ell_2} = \lVert \underline{D}_{\overline{h}}\rVert_2$. To conclude the proof, we compute
\begin{align*}
\lVert\underline{D}_{\overline{h}}\rVert_2^2 &= \sigma_{\text{max}}(\underline{D}_{\overline{h}})^2 = \lambda_{\text{max}}(\underline{D}_{\overline{h}}\underline{D}_{\overline{h}}^\top) = \lambda_{\text{max}}(E_{\overline{h}}\otimes I) \\
&= \lambda_{\text{max}}(E_{\overline{h}})\lambda_{\text{max}}(I) = \lambda_{\text{max}}(E_{\overline{h}}).
\end{align*}

\subsection{Proof of Proposition \ref{prop:guaranteed_imprv_l2_gain}} \label{app:proof_prop_guaranteed_improvement}

It holds that
\begin{align*}
&\lambda_\text{max}(E_{\overline{h}}) \stackrel{E_{\overline{h}}\succeq 0}{=} \lVert E_{\overline{h}}\rVert_2 \le \lVert E_{\overline{h}}\rVert_F = \sqrt{\text{trace}(E_{\overline{h}}^\top E_{\overline{h}})} \\
&=\sqrt{\sum_{i=1}^{\overline{h}-1}\hspace{-1pt}\left(\hspace{-1pt}i^2(\overline{h}-i)+\sum_{j=1}^{i-1}j^2\hspace{-1pt}\right)\hspace{-1pt}} \hspace{-1pt}=\hspace{-1pt} \sqrt{\frac{1}{6}(\overline{h}-1)\overline{h}(\overline{h}^2\hspace{-1pt}-\hspace{-1pt}\overline{h}\hspace{-1pt}+\hspace{-1pt}1)}.
\end{align*}
Further, we have a look at $f(\overline{h})\coloneqq \frac{\lVert E_{\overline{h}}\rVert_F}{\frac{\overline{h}}{2}(\overline{h}-1)} = \frac{2}{\sqrt{6}}\sqrt{\frac{\overline{h}^2-\overline{h}+1}{\overline{h}^2-\overline{h}}}$. It is obvious that $f(\overline{h}+1)\le f(\overline{h})$ for all $\overline{h}\in\N_{\ge 2}$, since $\overline{h}\mapsto \overline{h}^2-\overline{h}$ is a strictly increasing function on $\N$. Noting that $f(3) = \sqrt{\frac{28}{36}}<1$ concludes the proof.

\end{document}